\newtheorem{theorem}{Theorem}[section]
\newtheorem{proposition}[theorem]{Proposition}
\newtheorem{lemma}[theorem]{Lemma}
\newtheorem{corollary}[theorem]{Corollary}
\newtheorem{definition}[theorem]{Definition}
\newtheorem{remark}[theorem]{Remark}
\newtheorem{example}[theorem]{Example}
\newcommand{\tr}{{\rm Tr\hskip -0.2em}~}
\DeclareMathOperator{\frechetdiff}{\mathit d}
\newcommand{\fd}[1]{\hskip-0.2em\frechetdiff\hskip -0.23em{#1}}
\newcommand{\closefd}[1]{\hskip-0.2em\frechetdiff\hskip -0.35em{#1}}
\DeclareMathOperator{\Tf}{T}
\DeclareMathOperator*{\argmin}{argmin}
\begin{document}

\title{Convex multivariate operator means}
\author{Frank Hansen}
\date{June 18, 2018\\\small Revised August 27, 2018}

\maketitle

\begin{abstract} The dominant method for defining multivariate operator means is to express them as fix-points under a contraction with respect to the Thompson metric. Although this method is powerful, it crucially depends on monotonicity. We are developing a technique to prove the existence of multivariate operator means that are not necessarily monotone. This gives rise to an entire new class of non-monotonic multivariate operator means.\\[1ex]
{\bf MSC2010:} 47A64; 53C20; 47A63\\[1ex]
{\bf{Key words and phrases:}}  Multivariate operator means; convexity, geodesically convex functions.
\end{abstract}

\section{Introduction}

The study of large classes of multivariate operator means relies on two different methods. Originally, the Karcher or least squares mean was expressed as the unique point in which a geodesically convex function attained its minimal value and there satisfies the so-called Karcher equation, or critical point equation, with respect to the logarithmic function. A number of authors studied the properties of the Karcher mean, see \cite{kn:Lim:2013,kn:Lawson:2014,kn:Lim:2012,kn:moakher:2005,kn:bhatia:2006,kn:bhatia:2012,kn:hiai:2009} and the references therein. P{\'a}lfia, Lawson and Lim 
 developed a very powerful technique that expresses the weighted multivariate mean as the unique fix-point with respect to the Thompson metric. This approach has two benefits. It makes it rather easy to establish all of the desired properties of a multivariate geometric mean as discussed by Ando, Li and Mathias \cite{kn:ando:2004:1}.  Secondly, it allows for an extension from operators on a finite dimensional Hilbert space to operators on an infinite dimensional Hilbert space. Recently, P{\'a}lfia \cite{kn:palfia:2016} extended this approach to critical point equations for arbitrary operator monotone functions $ g $ with $ g(1)=0. $ P\'alfia's method relies on the study of a specific family of convex divergence functions and very subtle inequalities between functions of operators with respect to the Thompson metric. The fix-point approach, however, has the drawback that it crucially depends on monotonicity. We are for various reasons that will be explained elsewhere interested in means that are convex but not monotone. These means have important applications even for positive numbers.

Our approach is to go back to the setting of geodesically convex functions and simplify and refine the theory to the extent that we can handle large classes of critical point equations. The drawback is that we are restricted to operators on a finite dimensional Hilbert space. The benefit is that we can not only rewrite elements of the theory of multivariate operator means in a few pages, but also extend it to non-monotonic convex means. 

The weighted geometric mean $ X\#_pY $ of two positive definite operators $ X $ and $ Y $ is defined by setting
\[
X\#_pY=Y^{1/2}\bigl(Y^{-1/2}XY^{-1/2}\bigr)^pY^{1/2}\qquad 0\le p\le 1,
\]
and it is the perspective $ \mathcal P_f(X,Y) $ of $ X $ and $ Y $ with respect to the operator monotone function $ f(t)=t^p. $ The weighted geometric mean is congruence invariant, that is the equality
\[
C^*(X\#_pY)C= C^*XC\#_pC^*YC
\]
for positive definite $ X, Y $ and arbitrary invertible $ C. $ It is also increasing in each variable.

We recall that a function $ F\colon B(\mathcal H)_+\to\mathbf R $ defined in the set $ B(\mathcal H)_+ $ of positive definite operators on a finite dimensional Hilbert space $ \mathcal H $ is said to be geodesically convex if
\[
F(X\#_pY)\le pF(X)+(1-p)F(Y)\qquad X,Y\in B(\mathcal H)_+. 
\]
We shall frequently use the transformation $ f\to\Tf(f) $ defined below.
\begin{definition}
Let $ f\colon(0,\infty)\to \mathbf R $ be a continuously differential function. We set
\[
\Tf(f)(t)=-tf'(t)
\]
for $ t>0. $
\end{definition}

\section{The Riemannian gradient}

We study operator functions defined in positive definite operators given on the form
\begin{equation}\label{The trace function F(X)}
F_A(X)=\tr f(X^{-1/2}AX^{-1/2}),
\end{equation}
where $ A $ is positive definite and $ f\colon(0,\infty)\to\mathbf R $ is a real analytic function. Since $ f $ is real analytic we may also write
\[
F_A(X)=\tr X^{-1/2}f(AX^{-1})X^{1/2}=\tr f(AX^{-1})=\tr f(A^{1/2}X^{-1}A^{1/2}).
\]
\begin{proposition}\label{calculation of Riemannian gradient}
Let $ F_A\colon B(\mathcal H)\to\mathbf R $ be a trace function on the form (\ref{The trace function F(X)}). The Riemannian gradient is then given by the operator perspective
\[
\nabla_X F_A(X)=\mathcal P_g(A,X)=X^{1/2}g\bigl(X^{-1/2}AX^{-1/2}\bigr)X^{1/2}
\]
of the function $ g=\Tf(f). $ 
\end{proposition}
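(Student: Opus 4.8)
The plan is to compute the Fréchet derivative of $F_A$ at a point $X$ in the direction of a self-adjoint $H$, and then identify the Riemannian gradient by comparing against the Riemannian metric on the positive definite cone, which at $X$ is $\langle H, K\rangle_X = \tr X^{-1}HX^{-1}K$. That is, I want to show that $\fd{F_A}(X)[H] = \langle \mathcal P_g(A,X), H\rangle_X$ for all self-adjoint $H$, which pins down $\nabla_X F_A(X) = \mathcal P_g(A,X)$ uniquely.

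**Key steps:**

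First I would use the representation $F_A(X) = \tr f(AX^{-1})$, which is the most convenient for differentiation since only the factor $X^{-1}$ depends on $X$. Using the chain rule together with the cyclicity of the trace and the standard fact that for a scalar function $f$ one has $\tr \fd{(f\circ\phi)}(X)[H] = \tr f'(\phi(X))\,\fd\phi(X)[H]$ (the trace kills the complications of the Fréchet derivative of the functional calculus, leaving just $f'$ evaluated on the operator), I get
\[
\fd{F_A}(X)[H] = \tr\bigl(f'(AX^{-1})\,A\,\fd{(X^{-1})}(X)[H]\bigr) = -\tr\bigl(f'(AX^{-1})\,AX^{-1}HX^{-1}\bigr),
\]
using $\fd{(X^{-1})}(X)[H] = -X^{-1}HX^{-1}$. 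Next I would like to recognize the operator $f'(AX^{-1})\,AX^{-1}$ in terms of $g = \T f$, i.e.\ $g(t) = -tf'(t)$: formally $g(AX^{-1}) = -AX^{-1}f'(AX^{-1})$, so the expression becomes $\tr\bigl(g(AX^{-1})X^{-1}HX^{-1}\bigr)$. Then I would rewrite this using the similarity $AX^{-1} = X^{1/2}(X^{-1/2}AX^{-1/2})X^{-1/2}$, so that $g(AX^{-1}) = X^{1/2}g(X^{-1/2}AX^{-1/2})X^{-1/2}$; substituting and using cyclicity of the trace collapses the $X^{\pm 1/2}$ and $X^{-1}$ factors to give $\fd{F_A}(X)[H] = \tr\bigl(X^{-1}\mathcal P_g(A,X)X^{-1}H\bigr) = \langle \mathcal P_g(A,X), H\rangle_X$, as desired.

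**Main obstacle and care points:**

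The step I expect to require the most care is the manipulation of noncommuting functional calculus: the identity $g(AX^{-1}) = -AX^{-1}f'(AX^{-1})$ is clean because $g(t)=-tf'(t)$ is a genuine scalar identity applied to a single operator $AX^{-1}$, but one must be careful that $AX^{-1}$, while not self-adjoint, is similar to the positive definite operator $X^{-1/2}AX^{-1/2}$, so its functional calculus is well-defined and behaves correctly under the similarity — this is where real analyticity of $f$ (hence of $g$) is used, guaranteeing the power-series manipulations and the identity $\tr f'(AX^{-1})(\cdots) $ coming out of the Fréchet derivative are legitimate. I would also note explicitly that it suffices to verify the gradient identity against self-adjoint directions $H$ and that both sides are real, and that the positive definite cone being open means no boundary terms intrude. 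Finally, I would remark that the alternative representations of $F_A$ given in the excerpt all agree by real analyticity, so the choice $\tr f(AX^{-1})$ is legitimate; everything else is routine trace calculus.
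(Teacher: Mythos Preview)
Your approach is correct and essentially identical to the paper's: both differentiate $\tr f(AX^{-1})$, use the similarity $AX^{-1}=X^{1/2}(X^{-1/2}AX^{-1/2})X^{-1/2}$ to pass to self-adjoint functional calculus, and read off the gradient from the Riemannian inner product $\langle\cdot,\cdot\rangle_X$. One small slip: after substituting $g$, the intermediate expression should be $\tr\bigl(g(AX^{-1})\,HX^{-1}\bigr)$, not $\tr\bigl(g(AX^{-1})\,X^{-1}HX^{-1}\bigr)$---your final identity $\fd{F_A}(X)[H]=\tr\bigl(X^{-1}\mathcal P_g(A,X)X^{-1}H\bigr)$ is nonetheless correct.
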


\begin{proof}
We write $ F_A(X)=\tr f(AX^{-1}) $ and calculate the Fréchet differential $ \fd{}F_A(X) $ in a self-adjoint operator $ V $ and obtain
\[
\begin{array}{rl}
\fd{}F_A(X)V&=\tr\closefd{}f\bigl(AX^{-1}\bigr)\fd(AX^{-1})V\\[2ex]
&=\tr f'\bigl(AX^{-1}\bigr)A\,\fd(X^{-1})V\\[2ex]
&=-\tr f'\bigl(AX^{-1}\bigr)A X^{-1}VX^{-1}\\[2ex]
&=-\tr X^{1/2} f'\bigl(X^{-1/2}AX^{-1/2}\bigr)X^{-1/2}AX^{-1}VX^{-1}\\[2ex]
&=-\tr X^{-1/2} f'\bigl(X^{-1/2}AX^{-1/2}\bigr)X^{-1/2}AX^{-1}V,
\end{array}
\]
where we used the formula $ \tr\closefd{}f(x)h=\tr f'(x)h, $ cf.~\cite[Theorem 2.2]{kn:hansen:1995}.
The Riemannian gradient $ \nabla_X F_A(X) $ is defined by the relation
\[
<\nabla_X F_A(X)\mid V>_X=\fd{}F_A(X)V
\]
for arbitrary $ V. $ That is
\[
\tr X^{-1}\nabla_X F_A(X) X^{-1}V=-\tr X^{-1/2} f'\bigl(X^{-1/2}AX^{-1/2}\bigr)X^{-1/2}AX^{-1}V,
\]
and since $ V $ is arbitrary
\[
X^{-1}\nabla_X F_A(X) X^{-1}=-X^{-1/2} f'\bigl(X^{-1/2}AX^{-1/2}\bigr)X^{-1/2}AX^{-1}.
\]
By multiplying from the left and the right with $ X $ we obtain
\[
\begin{array}{rl}
\nabla_X F_A(X) &= -X^{1/2} f'\bigl(X^{-1/2}AX^{-1/2}\bigr)X^{-1/2}A\\[2ex]
&= -X^{1/2} f'\bigl(X^{-1/2}AX^{-1/2}\bigr)X^{-1/2}AX^{-1/2}X^{1/2}\\[2ex]
&=X^{1/2} g(X^{-1/2}AX^{-1/2})X^{1/2}\\[2ex]
&=\mathcal P_g(A,X),
\end{array}
\]
which is the statement to be proved.
\end{proof}

If $ g(1)=0, $ it follows that the Riemannian gradient of the above function $ F_A(X) $ vanishes for $ X=A. $
Note also that the map $ f\to\Tf(f) $ is linear. The following lemma is elementary.

\begin{lemma}\label{f versus tilde f}
We list $ g=\Tf(f) $ for some functions $ f. $
\[
\begin{array}{ll}
f(t)=\log(t+\lambda) &g(t)=\displaystyle\frac{-t}{t+\lambda}\\[2ex]
f(t)=\displaystyle\frac{(\log t)^2}{2}& g(t)=-\log t\\[2ex]
f(t)=t\log t-t\qquad& g(t)=-t\log t\\[2ex]
f(t)=t^p &g(t)=-p t^p,
\end{array}
\]
where $ \lambda $ and $ p $ are constants, and $ \lambda\ge 0. $
\end{lemma}

\section{A class of geodesically convex functions}

\begin{definition}
We call a function $ f\colon(0,\infty)\to\mathbf R $ convex-log, if it can be written on the form
\[
f(t)=\Phi(\log t)\qquad t>0,
\]
where $ \Phi\colon \mathbf R\to\mathbf R $ is a convex function. We say that $ f $ is strictly convex-log, if $ \Phi $ is strictly convex.
\end{definition}

A convex-log function satisfies the inequality
\begin{equation}\label{inequality for convex-log functions}
f(t^ps^{1-p})\le pf(t)+(1-p)f(s)\qquad t,s>0
\end{equation}
for $ p\in[0,1]. $ Indeed,
\[
\begin{array}{l}
f(t^ps^{1-p})=\Phi\bigl(\log(t^p s^{1-p})\bigr)=\Phi(p\log t+(1-p)\log s)\\[1.5ex]
\le p\Phi(\log t)+(1-p)\Phi(\log s)=p f(t)+(1-p) f(s).
\end{array}
\]
\begin{definition}
Let $ X $ and $ Y $ be self-adjoint $ n\times n $ matrices. We denote by $ x_{[1]},\dots,x_{[n]} $ and $ y_{[1]},\dots,y_{[n]} $ the eigenvalues of respectively $ X $ and $ Y $ counted with multiplicity and decreasingly ordered. We say that $ X $ is majorized by $ Y, $ and we write $ X\prec Y, $ if
\[
\sum_{i=1}^k x_{[i]}\le\sum_{i=1}^k y_{[i]}\qquad k=1,\dots,n-1
\]
and $ \tr X=\tr Y. $
\end{definition}

Ando and Hiai \cite[Corollary 2.3]{kn:ando:1994:2} proved the following result.

\begin{theorem}\label{Ando-Hiai theorem}
Let $ X $ and $ Y $ be positive definite operators acting on a Hilbert space of finite dimension. Then
\[
\log(X\#_pY)\prec p\log X + (1-p)\log Y
\]
for $ p\in[0,1]. $
\end{theorem}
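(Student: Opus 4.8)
The plan is to reduce the operator statement $\log(X\#_pY)\prec p\log X+(1-p)\log Y$ to the scalar majorization inequality $f(X\#_pY)\prec pf(X)+(1-p)f(Y)$ applied to the function $f(t)=\log t$, or rather to show that majorization results of this type follow from the Ando--Hiai type inequalities for antisymmetric tensor powers. The standard route is to use the characterization of majorization via trace functions: $U\prec V$ for self-adjoint matrices if and only if $\tr\varphi(U)\le\tr\varphi(V)$ for every convex function $\varphi\colon\mathbf R\to\mathbf R$, together with the equality $\tr U=\tr V$. So I would first establish $\tr\log(X\#_pY)=p\tr\log X+(1-p)\tr\log Y$, which is immediate from multiplicativity of the determinant: $\det(X\#_pY)=(\det X)^p(\det Y)^{1-p}$ by the definition $X\#_pY=Y^{1/2}(Y^{-1/2}XY^{-1/2})^pY^{1/2}$ and $\tr\log=\log\det$.

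The substantive part is the family of inequalities $\sum_{i=1}^k\lambda_i\bigl(\log(X\#_pY)\bigr)\le\sum_{i=1}^k\bigl(p\lambda_i(\log X)+(1-p)\lambda_i(\log Y)\bigr)$ for the decreasingly ordered eigenvalues. Here the key step is to pass to the $k$-th antisymmetric tensor power $\wedge^k$. The map $Z\mapsto\wedge^k Z$ is multiplicative, sends positive definite operators to positive definite operators, and satisfies $\wedge^k(Z^p)=(\wedge^k Z)^p$; consequently $\wedge^k(X\#_pY)=(\wedge^k X)\#_p(\wedge^k Y)$. Moreover the largest eigenvalue of $\wedge^k\log(X\#_pY)$ equals $\sum_{i=1}^k\lambda_i(\log(X\#_pY))$, and $\wedge^k\log X=\log_{(k)}\wedge^kX$ in the appropriate sense. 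Thus the $k$-fold inequality reduces to the $k=1$ case, i.e.\ to the single operator-norm (largest-eigenvalue) inequality $\lambda_1\bigl(\log(X\#_pY)\bigr)\le p\,\lambda_1(\log X)+(1-p)\,\lambda_1(\log Y)$ — equivalently, to showing that $\log(X\#_pY)\le pt_1+(1-p)t_2$ whenever $\log X\le t_1$ and $\log Y\le t_2$, which is the Lie--Trotter / Golden--Thompson flavored statement $X\#_pY\le e^{pt_1+(1-p)t_2}$ following from $X\le e^{t_1}$, $Y\le e^{t_2}$ and monotonicity plus congruence invariance of $\#_p$.

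The main obstacle I anticipate is the $k=1$ estimate itself, because the geometric mean $\#_p$ is operator monotone but $\log$ and $\exp$ are not jointly well behaved, so one cannot simply substitute bounds. The clean way around this is the following: by congruence invariance it suffices to treat the case $\lambda_1(\log X)=\lambda_1(\log Y)$ normalized, or more efficiently to invoke the known inequality $\|X\#_pY\|\le\|X\|^p\|Y\|^{1-p}$ for the operator norm (a consequence of $X\le\|X\|I$, $Y\le\|Y\|I$ and $I\#_pI=I$ together with monotonicity of the mean), which upon taking logarithms is exactly $\lambda_1(\log(X\#_pY))\le p\lambda_1(\log X)+(1-p)\lambda_1(\log Y)$. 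Applying this same norm inequality to $\wedge^k X$, $\wedge^k Y$ and $\wedge^k(X\#_pY)=(\wedge^kX)\#_p(\wedge^kY)$ yields the $k$-th partial-sum inequality, and combined with the determinant identity this gives the majorization; one then concludes by noting that Theorem~\ref{Ando-Hiai theorem} as stated is precisely this chain of inequalities. In fact, since the excerpt explicitly attributes the result to Ando and Hiai, I would present this as a recollection of their argument rather than claim originality, emphasizing the antisymmetric tensor power technique as the conceptual heart.
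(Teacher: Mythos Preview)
The paper does not actually prove this theorem; it quotes it from Ando and Hiai and uses it as a black box. So the relevant comparison is with the original Ando--Hiai argument, and there your outline has a real gap. Majorization $\log(X\#_pY)\prec p\log X+(1-p)\log Y$ requires
\[
\sum_{i=1}^k\lambda_i^\downarrow\bigl(\log(X\#_pY)\bigr)\le\sum_{i=1}^k\lambda_i^\downarrow\bigl(p\log X+(1-p)\log Y\bigr),
\]
where the right-hand side involves the eigenvalues of the \emph{matrix} $p\log X+(1-p)\log Y$. What you set out to prove, and what your norm inequality $\|X\#_pY\|\le\|X\|^p\|Y\|^{1-p}$ plus antisymmetric tensor powers actually delivers, is
\[
\sum_{i=1}^k\lambda_i^\downarrow\bigl(\log(X\#_pY)\bigr)\le p\sum_{i=1}^k\lambda_i^\downarrow(\log X)+(1-p)\sum_{i=1}^k\lambda_i^\downarrow(\log Y).
\]
Since $A\mapsto\sum_{i=1}^k\lambda_i^\downarrow(A)=\max_{P}\tr PA$ (maximum over rank-$k$ projections) is convex, one always has $\sum_{i=1}^k\lambda_i^\downarrow(p\log X+(1-p)\log Y)\le p\sum_i\lambda_i^\downarrow(\log X)+(1-p)\sum_i\lambda_i^\downarrow(\log Y)$, so your bound is by the larger quantity and does not yield the required majorization. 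In particular, the application in the paper (producing a unital trace-preserving completely positive map via Theorem~\ref{existence of unital trace preserving map}) genuinely needs the eigenvalues of the sum $p\log X+(1-p)\log Y$ on the right.

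The missing ingredient is exactly what makes the Ando--Hiai result deep. Their argument combines the Ando--Hiai inequality (if $A\#_pB\le I$ then $A^r\#_pB^r\le I$ for $r\ge 1$, proved by Furuta-type techniques) with antisymmetric tensor powers to obtain the log-majorization $A\#_pB\prec_{\log}(A^r\#_pB^r)^{1/r}$ for $0<r\le 1$, and then lets $r\to 0$ using the Lie--Trotter formula $\lim_{r\to 0}(A^r\#_pB^r)^{1/r}=\exp\bigl(p\log A+(1-p)\log B\bigr)$. It is this limit that places the matrix $p\log X+(1-p)\log Y$ on the right-hand side; monotonicity of $\#_p$ alone cannot reach it.
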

Chi-Kwong Li and Yiu Poon \cite[Theorem 3.3]{kn:li:2011} proved the following result.

\begin{theorem}\label{existence of unital trace preserving map}
Let $ A $ and $ B $ be self-adjoint operators acting on a Hilbert space $ \mathcal H $ of finite dimension. There exists a unital trace preserving completely positive linear map $ T\colon B(\mathcal H)\to B(\mathcal H) $ such that
\[
T(A)=B,
\]
if and only if  $ B\prec A. $ 
\end{theorem}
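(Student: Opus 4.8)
The plan is to prove the two implications separately. For the ``only if'' direction I would use only that $T$ is positive, unital and trace-preserving; complete positivity is not needed here. For the converse I would construct $T$ explicitly, reducing to the diagonal case and invoking the classical description of vector majorization by doubly stochastic matrices, together with the fact that conjugation by a unitary is unital, trace-preserving and completely positive.

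For ``only if'', suppose $T$ is unital, trace-preserving and positive with $T(A)=B$. Taking traces gives $\tr B=\tr A$ at once. The key idea is to pass to the adjoint $T^*$ with respect to the trace inner product $\langle X,Y\rangle=\tr(X^*Y)$: a short computation shows $T^*$ is positive, and that it is unital precisely because $T$ is trace-preserving and trace-preserving precisely because $T$ is unital. Then, fixing $k\in\{1,\dots,n-1\}$ and letting $P$ be the spectral projection of $B$ onto its $k$ largest eigenvalues, the Ky Fan maximum principle gives
\[
\sum_{i=1}^k b_{[i]}=\tr(PB)=\tr\bigl(P\,T(A)\bigr)=\tr\bigl(T^*(P)\,A\bigr),
\]
and $Q:=T^*(P)$ satisfies $0\le Q\le T^*(I)=I$ and $\tr Q=\tr P=k$. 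Maximizing the linear functional $R\mapsto\tr(RA)$ over the compact convex set $\{R:0\le R\le I,\ \tr R=k\}$, whose extreme points are exactly the rank-$k$ orthogonal projections, yields $\tr(QA)\le\sum_{i=1}^k a_{[i]}$, hence $B\prec A$.

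For ``if'', assume $B\prec A$ and diagonalize, $A=U\Lambda_A U^*$ and $B=V\Lambda_B V^*$, so that the eigenvalue vector of $B$ is majorized by that of $A$. The plan is to invoke the Hardy--Littlewood--P\'olya theorem (equivalently Birkhoff's theorem) to write the eigenvalue vector of $B$ as a convex combination of permutations of the eigenvalue vector of $A$, with permutation matrices $P_1,\dots,P_m$ and weights $\lambda_1,\dots,\lambda_m\ge 0$ summing to $1$, so that
\[
T_0(X)=\sum_{j=1}^m\lambda_j\,U_{P_j}XU_{P_j}^*
\]
is a convex combination of unitary conjugations---hence unital, trace-preserving and completely positive---and satisfies $T_0(\Lambda_A)=\Lambda_B$. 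Then $T(X)=V\,T_0(U^*XU)\,V^*$ is a composition of maps with all three properties and $T(A)=V\Lambda_B V^*=B$, as required.

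I expect the only genuinely nontrivial ingredient to be the Hardy--Littlewood--P\'olya/Birkhoff characterization of majorization of vectors, which carries the weight of the ``if'' direction; everything else---the properties of the adjoint map, the extreme-point description of $\{R:0\le R\le I,\ \tr R=k\}$, and the Ky Fan maximum principle---is standard. It is also precisely the finite-dimensional nature of this ingredient that confines the argument, and, as the introduction notes, the whole approach of this paper, to finite-dimensional $\mathcal H$.
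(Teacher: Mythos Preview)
Your argument is correct in both directions: the adjoint-map/Ky~Fan computation cleanly gives $B\prec A$ from a unital trace-preserving positive map, and the diagonalize-then-Birkhoff construction produces a random-unitary (hence completely positive, unital, trace-preserving) channel sending $A$ to $B$. The only minor remark is that in the ``only if'' part you do not really need the extreme-point description of $\{R:0\le R\le I,\ \tr R=k\}$; the inequality $\tr(QA)\le\sum_{i\le k}a_{[i]}$ for $0\le Q\le I$ with $\tr Q=k$ is itself the content of Ky~Fan's maximum principle in its standard form.

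As for comparison with the paper: there is nothing to compare. The paper does not prove this theorem; it simply quotes it as \cite[Theorem~3.3]{kn:li:2011} (Li--Poon) and uses it as a black box in the proof of the geodesic convexity theorem that follows. Your write-up therefore supplies a self-contained proof where the paper gives only a citation. The route you take---majorization $\Leftrightarrow$ doubly stochastic image via Hardy--Littlewood--P\'olya/Birkhoff, then lift to matrices by unitary conjugation---is the standard one and is in the spirit of the Li--Poon reference.
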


\begin{theorem}
 let $ f $ be a convex-log function. The trace function
\[
G(X)=\tr f\bigl(X\bigr)
\]
is geodesically convex in positive definite operators acting on a Hilbert space $ \mathcal H $ of finite dimension. If $ f $ is strictly convex-log then $ G(X) $ is strictly geodesically convex.
\end{theorem}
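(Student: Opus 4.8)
The plan is to combine the two cited theorems in the obvious way. I want to show that for positive definite $X,Y$ and $p\in[0,1]$,
\[
G(X\#_pY)=\tr f(X\#_pY)\le p\,\tr f(X)+(1-p)\tr f(Y)=pG(X)+(1-p)G(Y).
\]
First I would set $Z=\log(X\#_pY)$ and $W=p\log X+(1-p)\log Y$. By the Ando--Hiai theorem (Theorem~\ref{Ando-Hiai theorem}) we have the majorization $Z\prec W$. By the Li--Poon theorem (Theorem~\ref{existence of unital trace preserving map}), applied with $A=W$ and $B=Z$, there is a unital trace preserving completely positive linear map $T\colon B(\mathcal H)\to B(\mathcal H)$ with $T(W)=Z$.

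Next I would invoke the standard Jensen-type trace inequality for unital positive trace preserving maps: if $\Phi\colon\mathbf R\to\mathbf R$ is convex and $T$ is unital positive trace preserving, then $\tr\Phi(T(W))\le\tr T(\Phi(W))=\tr\Phi(W)$ for any self-adjoint $W$. Since $f(t)=\Phi(\log t)$ with $\Phi$ convex, writing everything in terms of the logarithms gives
\[
\tr f(X\#_pY)=\tr\Phi(Z)=\tr\Phi(T(W))\le\tr\Phi(W)=\tr\Phi\bigl(p\log X+(1-p)\log Y\bigr).
\]
It then remains to bound $\tr\Phi(p\log X+(1-p)\log Y)$ above by $p\,\tr\Phi(\log X)+(1-p)\,\tr\Phi(\log Y)$. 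This is again a convexity statement: the map $H\mapsto\tr\Phi(H)$ on self-adjoint matrices is convex whenever $\Phi$ is convex (it is a standard consequence of the Peierls--Bogoliubov inequality, or of the fact that $\tr\Phi$ is the supremum of affine functions $H\mapsto\tr(\Phi(K)+\Phi'(K)(H-K))$), so applying it to the pair $H=\log X$, $H'=\log Y$ with weight $p$ yields exactly the required inequality, and hence $G(X\#_pY)\le pG(X)+(1-p)G(Y)$.

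For the strict statement, I would track where equality can occur. If $\Phi$ is strictly convex, then equality in the last convexity step $\tr\Phi(pH+(1-p)H')\le p\,\tr\Phi(H)+(1-p)\,\tr\Phi(H')$ for $p\in(0,1)$ forces $\log X=\log Y$, i.e.\ $X=Y$; so for $X\neq Y$ and $p\in(0,1)$ the inequality is strict, which is precisely strict geodesic convexity. The main obstacle I anticipate is purely expository rather than mathematical: the argument rests on two ``folklore'' trace inequalities (the Jensen trace inequality for unital trace preserving positive maps, and the convexity of $H\mapsto\tr\Phi(H)$), and I would need to either cite clean references for them or include short self-contained proofs; care is also needed to ensure $\Phi$ can be taken continuous (indeed locally Lipschitz, being convex on $\mathbf R$) so that $\Phi(H)$ is well defined by functional calculus and the chain $f(t)=\Phi(\log t)$ is legitimate on the spectrum of the positive definite operators involved.
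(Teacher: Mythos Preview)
Your proposal is correct and follows essentially the same route as the paper: Ando--Hiai majorization, then Li--Poon to obtain a unital trace-preserving completely positive map, then a trace Jensen inequality, and finally the ordinary convexity of $H\mapsto\tr\Phi(H)$. The only cosmetic difference is that the paper writes out the Kraus decomposition of $T$ explicitly and invokes the Hansen--Pedersen version of Jensen's trace inequality, whereas you apply the Jensen step directly to $T$; your treatment of the strict case is also slightly more detailed than the paper's.
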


\begin{proof}
By definition there exists a convex function $ \Phi\colon\mathbf R\to\mathbf R $ such that
\[
G\bigl(X\#_pY\bigr)=\tr f\bigl(X\#_pY\bigr)=\tr\Phi\bigl(\log(X\#_pY)\bigr)
\]
for positive definite $ X $ and $ Y, $ and $ p\in[0,1]. $ Since by Theorem~\ref{Ando-Hiai theorem}
\[
\log(X\#_pY)\prec p\log X + (1-p)\log Y,
\]
there exists according to Theorem~\ref{existence of unital trace preserving map} a unital trace preserving linear completely positive map $ T\colon B(\mathcal H)\to B(\mathcal H) $ such that
\[
T\bigl(p\log X + (1-p)\log Y\bigr)=\log(X\#_pY).
\]
This implies the existence of matrices $ a_1,\dots,a_n $ with
\[
a_1^*a_1+\cdots+a_n^*a_n= I\qquad\text{and}\qquad a_1a_1^*+\cdots+a_na_n^*= I
\]
such that
\[
\log\bigl(X\#_pY\bigr)=\sum_{i=1}^n a_i^*\bigl(p\log X+(1-p)\log Y\bigr)a_i.
\]
We thus obtain
\[
\begin{array}{l}
\displaystyle G\bigl(X\#_pY\bigr)=\tr\Phi\left(\sum_{i=1}^n a_i^*\bigl(p\log X+(1-p)\log Y\bigr)a_i\right)\\[2ex]
\displaystyle\le\tr\sum_{i=1}^n a_i^*\Phi\bigl(p\log X+(1-p)\log Y\bigr)a_i\\[3.5ex]
=\tr \Phi\bigl(p\log X+(1-p)\log Y\bigr)\le p\tr\Phi(\log X)+(1-p)\tr\Phi(\log Y)\\[2ex]
=pG(X)+(1-p)G(Y),
\end{array}
\]
where we used Jensen's trace inequality \cite[Theorem 2.4]{kn:hansen:2003:2} and the convexity of $ \Phi $ under the trace.  We proved that $ G(X) $ is geodesically convex, and
we also realize that $ G(X) $ is strictly geodesically convex if $ \Phi $ is strictly convex.
\end{proof}

\begin{corollary}\label{main geodesically convex function}
Let $ A $ be a positive definite operator, and let $ f $ be a real analytic strictly convex-log function. The trace function
\[
F_A(X)=\tr f\bigl(X^{-1/2}AX^{-1/2}\bigr)
\]
is strictly geodesically convex in positive definite operators $ X. $ 
\end{corollary}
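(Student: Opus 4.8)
The plan is to reduce Corollary~\ref{main geodesically convex function} to the immediately preceding theorem by a congruence change of variables. The key observation is that the map $ X\mapsto X^{-1/2}AX^{-1/2} $ intertwines the geometric mean on the domain with the geometric mean on the target, so that $ F_A $ is essentially a function of the form $ \tr f(\,\cdot\,) $ precomposed with a geodesic isometry. More precisely, writing $ A=C C^* $ for some invertible $ C $ (for instance $ C=A^{1/2} $), one uses the congruence invariance of the weighted geometric mean, namely
\[
C^{-1}(X\#_pY)C^{-*}=C^{-1}XC^{-*}\#_pC^{-1}YC^{-*},
\]
together with the identity $ X^{-1/2}AX^{-1/2}=U(C^{-1}XC^{-*})^{-1}U^* $ for a suitable unitary $ U $ (arising from the polar-type decomposition relating $ X^{-1/2}AX^{-1/2} $ and $ A^{1/2}X^{-1}A^{1/2} $, which are unitarily equivalent since they are $ (X^{-1/2}A^{1/2})(A^{1/2}X^{-1/2}) $ and $ (A^{1/2}X^{-1/2})(X^{-1/2}A^{1/2}) $). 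Because the trace is unitarily invariant, $ F_A(X)=\tr f\bigl((C^{-1}XC^{-*})^{-1}\bigr) $.

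The next step is to handle the inversion. Define $ h(t)=f(1/t) $ for $ t>0 $; then $ F_A(X)=\tr h(C^{-1}XC^{-*}) $. One must check that $ h $ is again strictly convex-log: if $ f(t)=\Phi(\log t) $ with $ \Phi $ strictly convex, then $ h(t)=\Phi(-\log t)=\Psi(\log t) $ where $ \Psi(s)=\Phi(-s) $ is also strictly convex, so $ h $ is strictly convex-log. Thus $ G(Z):=\tr h(Z) $ is strictly geodesically convex by the preceding theorem.

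Finally, one composes with the congruence $ Z=C^{-1}XC^{-*} $. This map sends positive definite $ X $ to positive definite $ Z $ bijectively, and by congruence invariance it carries the weighted geometric mean to the weighted geometric mean: $ C^{-1}(X\#_pY)C^{-*}=(C^{-1}XC^{-*})\#_p(C^{-1}YC^{-*}) $. Hence
\[
F_A(X\#_pY)=G\bigl(C^{-1}(X\#_pY)C^{-*}\bigr)=G\bigl((C^{-1}XC^{-*})\#_p(C^{-1}YC^{-*})\bigr),
\]
and strict geodesic convexity of $ G $ together with the bijectivity of the congruence (so that $ X\ne Y $ forces $ C^{-1}XC^{-*}\ne C^{-1}YC^{-*} $) yields the strict inequality $ F_A(X\#_pY)<pF_A(X)+(1-p)F_A(Y) $. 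I expect the main technical point to be the unitary-equivalence bookkeeping in the first step that lets one replace $ \tr f(X^{-1/2}AX^{-1/2}) $ by $ \tr f((C^{-1}XC^{-*})^{-1}) $; once that is pinned down via $ \tr f(AX^{-1})=\tr f(C^{-1}XC^{-*})^{-1}) $ — which actually follows directly from the alternative expressions for $ F_A $ already recorded after equation~(\ref{The trace function F(X)}), since $ AX^{-1} $ and $ (C^{-1}XC^{-*})^{-1} $ have the same spectrum — the rest is a routine application of the theorem.
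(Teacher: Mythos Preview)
Your argument is correct and follows essentially the same route as the paper: rewrite $F_A(X)$ as the trace function $G$ of the preceding theorem evaluated at a congruence transform of $X^{\pm 1}$, then use congruence invariance of $\#_p$ to push the geometric mean inside and apply strict geodesic convexity of $G$. The only cosmetic difference is that the paper keeps $f$ and handles the inversion via the identity $(X\#_pY)^{-1}=X^{-1}\#_pY^{-1}$, writing $F_A(X)=G\bigl(A^{1/2}X^{-1}A^{1/2}\bigr)$ directly, whereas you absorb the inversion into the function by passing to $h(t)=f(1/t)$ and checking that $h$ is again strictly convex-log; both variants are equally valid, and your unitary-equivalence bookkeeping in the first paragraph is indeed unnecessary once you invoke the alternative expression $F_A(X)=\tr f\bigl(A^{1/2}X^{-1}A^{1/2}\bigr)$ recorded after~(\ref{The trace function F(X)}).
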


\begin{proof} Let $ G(X)=\tr f(X) $ be the strictly geodesically function studied in the preceding theorem. Since $ f $ is real analytic we obtain
\[
F_A(X)=\tr f\bigl(A^{1/2}X^{-1}A^{1/2}\bigr)=G\bigl(A^{1/2}X^{-1}A^{1/2}\bigr)
\]
and thus
\[
\begin{array}{rl}
F_A(X\#_pY)&=G\bigl(A^{1/2}(X\#_pY)^{-1}A^{1/2}\bigr)=G\bigl(A^{1/2}(X^{-1}\#_pY^{-1})A^{1/2}\bigr)\\[1.5ex]
&=G\bigl((A^{1/2}X^{-1}A^{1/2})\#_p(A^{1/2}Y^{-1}A^{1/2})\bigr)\\[1.5ex]
&\le p G(A^{1/2}X^{-1}A^{1/2})+(1-p)G(A^{1/2}Y^{-1}A^{1/2})\\[1.5ex]
&=p F_A(X)+(1-p)F_X(Y)
\end{array}
\]
for positive definite $ X $ and $ Y, $ and $ p\in[0,1]. $ We realize that $ F_A(X) $ is strictly geodesically convex.
\end{proof}

\begin{remark}
Note that if $ f(t)=\Phi(\log t) $ is convex-log, then 
\[
\Tf(f)(t)=-t f'(t)=-t\Phi'(\log t)t^{-1}=-\Phi'(\log t)
\]
and thus $ \Tf(f)'(t)=-\Phi''(t) t^{-1}\le 0, $ so $ g=\Tf(f) $ is necessarily decreasing.
\end{remark}

\begin{proposition}\label{proposition about T(f)}
Let $ g\colon(0,\infty)\to\mathbf R $ be a real analytic strictly decreasing function. There exists a strictly convex-log function $ f $ such that $ g=\Tf(f). $ In addition, if $ g(1)=0 $ then $ f'(1)=0. $\end{proposition}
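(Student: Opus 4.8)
The plan is to reverse the computation recorded in the Remark preceding the statement. The Remark shows that if $ f(t)=\Phi(\log t) $ with $ \Phi $ convex, then $ \Tf(f)(t)=-\Phi'(\log t) $, and hence that $ g=\Tf(f) $ is decreasing. So, given a strictly decreasing real analytic $ g $, I want to construct $ \Phi $ so that $ \Phi'(s)=-g(e^s) $ and then set $ f(t)=\Phi(\log t) $. First I would define $ \psi(s)=-g(e^s) $ for $ s\in\mathbf R $; this is real analytic as a composition of real analytic functions, and it is strictly increasing because $ g $ is strictly decreasing and $ s\mapsto e^s $ is strictly increasing. Then I would set
\[
\Phi(s)=\int_0^s\psi(u)\,du\qquad s\in\mathbf R,
\]
so that $ \Phi'=\psi $ is strictly increasing, whence $ \Phi $ is strictly convex (and real analytic). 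Finally, put $ f(t)=\Phi(\log t) $; by construction $ f $ is strictly convex-log and real analytic on $ (0,\infty) $, and the identity in the Remark gives $ \Tf(f)(t)=-\Phi'(\log t)=-\psi(\log t)=g(t) $, which is exactly $ \Tf(f)=g $.

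For the last clause, suppose $ g(1)=0 $. Then $ \psi(0)=-g(e^0)=-g(1)=0 $, so $ \Phi'(0)=0 $; since $ f'(t)=\Phi'(\log t)\,t^{-1} $ we get $ f'(1)=\Phi'(0)=0 $. (Equivalently, $ \Tf(f)(1)=-1\cdot f'(1)=g(1)=0 $ forces $ f'(1)=0 $ directly.)

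There is essentially no hard step here: the construction is just "integrate", and every regularity/monotonicity claim follows by a one-line check. The only point that needs a word of care is verifying that passing through $ \log $ and $ \exp $ preserves real analyticity on the correct domains — $ g $ is real analytic on $ (0,\infty) $, so $ s\mapsto g(e^s) $ is real analytic on all of $ \mathbf R $, its antiderivative $ \Phi $ is real analytic on $ \mathbf R $, and then $ t\mapsto\Phi(\log t) $ is real analytic on $ (0,\infty) $. One should also note that $ f $ is determined only up to an additive constant (the choice of lower limit $ 0 $ in the integral is immaterial), which is consistent with the fact that $ \Tf $ annihilates constants.
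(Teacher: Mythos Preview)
Your proof is correct and follows essentially the same approach as the paper's: define $\Phi$ by $\Phi'(x)=-g(e^x)$ (the paper phrases this as ``up to an additive constant'' where you choose an explicit antiderivative), verify strict convexity of $\Phi$, set $f(t)=\Phi(\log t)$, and check $\Tf(f)=g$. The only cosmetic difference is that the paper confirms strict convexity by computing $\Phi''(x)=-g'(e^x)e^x>0$, while you argue that $\Phi'=\psi$ is strictly increasing directly from the strict monotonicity of $g$; both are fine.
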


\begin{proof}
We define $ \Phi\colon\mathbf R\to\mathbf R $ up to an additive constant by setting
\[
\Phi'(x)=-g(e^x)\qquad x\in\mathbf R.
\]
Then $ \Phi $ is strictly convex, indeed $ \Phi''(x)=-g'(e^x)e^x>0 $ for $ x\in\mathbf R. $ The function
\[
f(t)=\Phi(\log t)\qquad t>0
\]
is thus strictly convex-log. In addition,
\[
\Tf(f)(t)=-t f'(t)=-t \Phi'(\log t)t^{-1}=g\bigl(e^{\log t}\bigr)=g(t).
\]
It also follows that $ f'(1)=0 $ if $ g(1)=0. $
\end{proof} 

We thus realize that the operator perspective $ \mathcal P_g(A,X) $ of a real analytic strictly decreasing function $ g $ may be obtained as the Riemannian gradient of the strictly geodesically convex function
\[
F_A(X)=\tr f\bigl(X^{-1/2}AX^{-1/2}\bigr),
\]
where $ f $ is chosen according to the above proposition. In \cite[Introduction]{kn:palfia:2016} another approach is given in terms of divergence functions in the special case, where $ -g $ is operator monotone with $ g(1)=0. $

\begin{example}
By different choices of convex functions $ \Phi $ we exhibit various geodesically convex functions of the form
$ F_A(X)=\tr f(X^{-1/2}AX^{-1/2}). $
\[
\begin{array}{lll}
\Phi(x)=\frac{1}{2}x^2 \qquad &f(t)=\frac{1}{2}(\log t)^2 \qquad &g(t)=-\log t\\[2ex]
\Phi(x)=\log(e^x+\lambda) & f(t)=\log(t+\lambda) &g(t)=-\displaystyle\frac{t}{t+\lambda}\\[2ex]
\Phi(x)=\exp(px) & f(t)=t^p & g(t)=-p t^p,
\end{array}
\]
where $ g=\Tf(f), $ $ \lambda\ge 0 $ and $ p\in\mathbf R. $
\end{example}

\section{Multivariate operator means}

Let $ g\colon(0,\infty)\to\mathbf R $ be a real analytic strictly decreasing function with $ g(1)=0, $ and choose $ f $ such that $ g=\Tf(f), $ cf. Proposition~\ref{proposition about T(f)}.
Let $ \omega=(\omega_1,\dots,\omega_k) $ be a probability vector and take a $ k $-tuple $ \mathcal A=(A_1,\dots,A_k) $ of positive definite operators. Since $ f $ is convex-log the trace function
\begin{equation}\label{convex combination of geodesically convex functions}
F_{\mathcal A}(X)=\sum_{i=1}^k \omega_i F_{A_i}(X)=\sum_{i=1}^n \omega_i\tr f\bigl(X^{-1/2}A_iX^{-1/2}\bigr)
\end{equation}
is strictly geodesically convex in positive definite operators $ X, $ cf. Corollary~\ref{main geodesically convex function}. If $ F_{\mathcal A}(X) $ is bounded from below then it admits a unique minimizer, and we then define
\[
X_g(\omega; A_1,\dots,A_k)=\argmin_{\substack{0<X}}F_{\mathcal A}(X)
\]
as the uniquely defined positive definite operator $ X $  in which $ F_{\mathcal A}(X) $ attains its minimal value. Furthermore, the minimizer $ X=X_g(\omega; A_1,\dots,A_k) $ is also the uniquely defined positive definite operator in which the Riemannian gradient vanishes, that is where
\begin{equation}\label{critical point equation}
\nabla_X F_{\mathcal A}(X)=\sum_{i=1}^k \omega_i \mathcal P_g(A_i,X)=0.
\end{equation}
We call $ X_g(\omega; A_1,\dots,A_k) $ the weighted operator mean associated with the function $ g $ and the vector of weights $ \omega. $ Since $ g(1)=0 $ we realize that
\[
X_g(\omega; A,\dots,A)=A.
\]
We also note that a multivariate mean  $ X_g(\omega;A_1,\dots,A_k) $ is symmetric in its $ k $ arguments if $ \omega=(k^{-1},\dots,k^{-1}). $ In this case we drop the reference to the probability vector $ \omega $ and plainly write $ X_g(A_1,\dots,A_k). $

\begin{theorem}\label{main theorem}
Let $ g\colon(0,\infty)\to\mathbf R $ be a strictly decreasing real analytic function with $ g(1)=0 $ that is  either operator convex or operator concave. The weighted operator mean $ X_g(\omega; \mathcal A)=X_g(\omega; A_1,\dots,A_k) $  is congruence invariant, that is the identity
\begin{equation}\label{condition for congruence invariance} 
X_g\bigl(\omega; C^*A_1C,\dots,C^*A_kC)=C^*X_g(\omega; A_1,\dots,A_k)C
\end{equation}
holds for every invertible operator $ C $ on the underlying Hilbert space. If $ g $ is operator convex then
\begin{equation}\label{concave mean}
X_g(\omega; A_1,\dots,A_k)\le\omega_1 A_1 +\cdots+\omega_k A_k
\end{equation}
and if $ g $ is operator concave then
\begin{equation}\label{hyper-mean}
X_g(\omega; A_1,\dots,A_k)\ge\omega_1 A_1 +\cdots+\omega_k A_k.
\end{equation}
\end{theorem}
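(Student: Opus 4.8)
The plan is to treat the two assertions separately. Congruence invariance will follow from a pointwise identity for the objective functions, while the comparison inequalities will follow from the critical point equation (\ref{critical point equation}); note that operator convexity or concavity of $g$ enters only in the inequalities and not in the invariance.

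For congruence invariance the key point is that $F_A(X)$ depends on $A$ and $X$ only through the similarity class of $AX^{-1}$. Indeed, as already observed, $F_A(X)=\tr f(AX^{-1})$, and for a real analytic $f$ one has $f(SMS^{-1})=Sf(M)S^{-1}$, hence $\tr f(SMS^{-1})=\tr f(M)$, for every invertible $S$ and every $M$ with spectrum in $(0,\infty)$. Given an invertible $C$ and a positive definite $X$, I would compute
\[
F_{C^*A_iC}(C^*XC)=\tr f\bigl((C^*A_iC)(C^*XC)^{-1}\bigr)=\tr f\bigl(C^*(A_iX^{-1})(C^*)^{-1}\bigr)=\tr f(A_iX^{-1})=F_{A_i}(X),
\]
and after taking the $\omega$-weighted sum, $F_{C^*\mathcal A C}(C^*XC)=F_{\mathcal A}(X)$ for every positive definite $X$. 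Since $X\mapsto C^*XC$ is a bijection of the positive definite cone, this identity shows at once that $F_{\mathcal A}$ is bounded below precisely when $F_{C^*\mathcal A C}$ is, and that the minimizer of the latter is $C^*$ times the minimizer of the former times $C$, which is (\ref{condition for congruence invariance}). (Alternatively one could verify the covariance $\mathcal P_g(C^*AC,C^*XC)=C^*\mathcal P_g(A,X)C$ of the operator perspective by a short polar decomposition argument and combine it with the uniqueness of the solution of (\ref{critical point equation}).)

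For the comparison inequalities, write $X=X_g(\omega;\mathcal A)$ and set $B_i=X^{-1/2}A_iX^{-1/2}$, so that $\sum_i\omega_iB_i=X^{-1/2}\bigl(\sum_i\omega_iA_i\bigr)X^{-1/2}$. Multiplying the critical point equation $\sum_i\omega_iX^{1/2}g(B_i)X^{1/2}=0$ on both sides by $X^{-1/2}$ gives
\[
\sum_{i=1}^k\omega_i g(B_i)=0.
\]
Suppose first that $g$ is operator convex. Operator convexity of $g$ then yields
\[
g\Bigl(\sum_{i=1}^k\omega_iB_i\Bigr)\le\sum_{i=1}^k\omega_i g(B_i)=0.
\]
From this operator inequality I would deduce that $T:=\sum_i\omega_iB_i$ satisfies $T\ge I$, arguing eigenvalue by eigenvalue: if $\lambda$ is an eigenvalue of $T$ with unit eigenvector $v$, then $g(T)v=g(\lambda)v$, so $g(\lambda)=\langle g(T)v,v\rangle\le 0=g(1)$, and since $g$ is \emph{strictly decreasing} as a scalar function this forces $\lambda\ge 1$; as $\lambda$ was arbitrary, $T\ge I$. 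Conjugating $T\ge I$ by $X^{1/2}$ gives $\sum_i\omega_iA_i\ge X$, which is (\ref{concave mean}). The operator concave case is identical with all inequalities reversed: $g(T)\ge 0$, every eigenvalue of $T$ satisfies $g(\lambda)\ge 0=g(1)$ hence $\lambda\le 1$, so $T\le I$ and $\sum_i\omega_iA_i\le X$, which is (\ref{hyper-mean}).

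I expect the main obstacle to be conceptual rather than computational: the passage from $g(T)\le 0$ to $T\ge I$ must \emph{not} invoke operator monotonicity of $g$ --- which in general fails, since a strictly decreasing operator convex function need not be operator monotone --- but only the scalar strict monotonicity of $g$, applied to the eigenvalues of $T$ through the spectral theorem. The remaining ingredients (the identity $F_{C^*A_iC}(C^*XC)=F_{A_i}(X)$, the reduction of the critical point equation to $\sum_i\omega_i g(B_i)=0$, and, if one prefers that route, the covariance of the perspective) are routine functional-calculus bookkeeping.
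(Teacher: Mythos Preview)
Your argument is correct. For the comparison inequalities you proceed exactly as the paper does---operator convexity or concavity of $g$ applied to $\sum_i\omega_i g(B_i)=0$, followed by the scalar strict monotonicity of $g$ read off spectrally---your eigenvalue discussion merely spelling out the step the paper compresses into ``Since $g$ is decreasing with $g(1)=0$ we derive that $\sum_i\omega_iY^{-1/2}A_iY^{-1/2}\ge I$.''

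For congruence invariance, however, you take a genuinely different route. The paper works at the level of the Riemannian gradient: it quotes an external result to the effect that the perspective $\mathcal P_g$ is congruence invariant when $g$ is operator convex or operator concave, deduces that $\sum_i\omega_i\mathcal P_g(C^*A_iC,C^*XC)=C^*\bigl(\sum_i\omega_i\mathcal P_g(A_i,X)\bigr)C$, and then appeals to uniqueness of the solution of the critical point equation. You instead work one level up with the objective itself, using $F_A(X)=\tr f(AX^{-1})$ and similarity invariance of the trace to obtain $F_{C^*AC}(C^*XC)=F_A(X)$ directly. Your approach is shorter, self-contained, and---as you correctly note---does not use operator convexity or concavity of $g$ at all, so it in fact establishes congruence invariance under the standing hypotheses alone (real analytic, strictly decreasing, $g(1)=0$). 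The paper's route, by contrast, isolates a structural fact about perspectives that is reusable elsewhere. Your parenthetical alternative via the polar decomposition is essentially the paper's argument, carried out by hand rather than by citation.
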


\begin{proof}
If $ g $ is operator convex or operator concave, then the perspectives $ \mathcal P_g(A_i,X) $ are congruence invariant for $ i=1,\dots,k $ by \cite[Proposition 2.3]{kn:hansen:2014:4}. For an invertible operator $ C $ we thus obtain
\[
\sum_{i=1}^k \omega_i \mathcal P_g(C^*A_iC,C^*XC)=C^*\sum_{i=1}^k \omega_i \mathcal P_g(A_i,X)C=0
\]
for $ X=X_g(\omega; A_1,\dots,A_k). $ The uniqueness of the solution to the critical point equation~(\ref{critical point equation}) then implies that
\[
C^*X_g(\omega; A_1,\dots,A_k)C=X_g(\omega; C^*A_1C,\dots,C^*A_kC),
\]
which is the first statement. If $ g $ is operator convex, we obtain by setting $ Y=X_g(\omega; A_1,\dots,A_k) $ the inequality
\[
g\left(\sum_{i=1}^k \omega_i Y^{-1/2}A_iY^{-1/2}\right)\le\sum_{i=1}^k \omega_ig\bigl(Y^{-1/2}A_iY^{-1/2}\bigr)=0,
\]
where we used that the last sum is similar to the vanishing Riemannian gradient. Since $ g $ is decreasing with $ g(1)=0 $ we derive that
\[
\sum_{i=1}^k \omega_i Y^{-1/2}A_iY^{-1/2}\ge I
\]
and the first statement follows. The case when $ g $ is operator concave is similarly proved.
\end{proof}

If in the above theorem $ g\colon(0,\infty)\to\mathbf R $ is assumed to be operator convex, then $ -g $ is operator monotone (since it is also numerically monotone), and we realise that the first part of Theorem~\ref{main theorem} is already contained in the theory of P{\'a}lfia \cite{kn:palfia:2016}. We therefore know that the multivariate mean $  X_g(\omega;A_1,\dots,A_k) $ in this case is operator concave and increasing.

If however $ g $ is operator concave then we are in new territory. Since in this case $  X_g(\omega;A_1,\dots,A_k) $ satisfies inequality (\ref{hyper-mean}) we call such means for hyper-means. Based on various examples we suspect that hyper-means are convex; but they are in generel not monotonic.

\subsection{Bivariate operator means}

If a multivariate operator mean $ X_g(\omega;A_1,\dots,A_k) $ is generated by an either operator convex or operator concave function $ g\colon(0,\infty)\to\mathbf R, $ then it is congruence invariant and therefore the perspective of its restriction to $ k-1 $ variables, cf. also \cite[Proposition 3.3]{kn:hansen:2014:4}. This implies for $ k=2 $ that 
\[
X_g(\omega;A_1,A_2)=A_2^{1/2}X_g\bigl(\omega; A_2^{-1/2}A_1A_2^{-1/2},I\bigr)A_2^{1/2} 
\]
 is the perspective of the function
\[
\varphi(t)=X_g(\omega;t,1)\qquad t>0.
\]
This function satisfies the critical point equation (\ref{critical point equation}) taking the form
\[
\nabla_x F_{(t,1)}(x)=\omega_1\mathcal P_g(t,x)+\omega_2\mathcal P_g(1,x)=0
\]
for $ x=\varphi(t). $ By dividing with $ x $ this gives the equation
\begin{equation}\label{equation to determine the representing function}
\omega_1g(x^{-1}t)+\omega_2g(x^{-1})=0,
\end{equation}
which may be used to determine the function $ \varphi(t)=x. $
We call $ g $ the generating function for the mean $ X_g(\omega;\mathcal A), $ and we call  $ \varphi $ the representing function for the bivariate mean  $ X_g(\omega;A_1,A_2).  $

\section{Examples of multivariate operator means}

There are unfortunately no general formulas for multivariate operator means satisfying a critical point equation as in (\ref{critical point equation}) except in the case $ k=2. $ We shall give two examples of multivariate operator means and calculate their representing functions in the case $ k=2. $

\begin{example}
We examine the function
\[
F_{\mathcal A}(X)=\frac{1}{k}\sum_{i=1}^k \tr f\bigl(X^{-1/2}A_iX^{-1/2}\bigr)
\]
for positive definite operators $ \mathcal A=(A_1,\dots,A_k), $ where
\[
f(t)=t^{-p}+p t\qquad t>0
\]
for $ 0<p<1. $ Note that $ f $ and thus $ F_{\mathcal A}(X) $ are bounded from below by zero. Since $ f'(t)=p(t^{-(p+1)}-1) $ we obtain
\[
g(t)=\Tf(f)(t)=-tf'(t)=p(t^{-p}-t)\qquad t>0.
\]
Note that $ g $ is strictly decreasing with $ g(1)=0 $ and operator convex, hence $ -g $ is operator monotone.  The representing function $ \varphi(t)=x $ satisfies
\[
{\textstyle\frac{1}{2}}g(x^{-1}t)+{\textstyle\frac{1}{2}}g(x^{-1})=0
\]
according to (\ref{equation to determine the representing function}),
which after inserting $ g(t) $ takes the form
\[
p \bigl((x^{-1}t)^{-p}-x^{-1}t\bigr)+p\bigl( (x^{-1})^{-p}-x^{-1}\bigr)=0,
\]
or after dividing with $ p $ and rearranging the terms
\[
(t^{-p}+1)x^p=(t+1)x^{-1}
\]
with solution
\[
\varphi(t)=x=\left(\frac{t+1}{t^{-p}+1}\right)^{\frac{1}{p+1}}\qquad t>0.
\]
Since $ g $ is operator convex, it follows from P{\'a}lfia's theory that $ \varphi(t) $ is operator concave, hence operator monotone. However, $ \varphi(t) $ is not one of the known operator monotone functions. 
\end{example}

\begin{example}
Consider the function
\[
g(t)=p(1-t^p)\qquad t>0
\]
for $ 1\le p\le 2. $ 
It is numerically decreasing with $ g(1)=0 $ and operator concave. The associated convex function $ \Phi $ satisfying
\[
\Phi'(x)=-g(e^x)=p(e^{px}-1)\qquad x\in\mathbf R
\]
is up to a constant given by
\[
\Phi(x)=e^{px}-px.
\]
Note that $ \Phi $ is not operator convex, and that
\[
f(t)=\Phi(\log t)=t^p-p\log t\qquad t>0
\]
is convex-log and bounded from below by the constant $ 1. $ The representing function $ \varphi(t) $ for the associated symmetric convex operator mean of two variables $ X=X_g(A,B) $ is  given, cf. equation (\ref{equation to determine the representing function}), as the unique solution $ x=\varphi(t) $ to the functional equation
\[
g(x^{-1}t)+g(x^{-1})=0.
\]
Inserting $ g $ we obtain
\[
p\bigl((x^{-1}t)^p-1\bigr)+p\bigl(x^{-1}-1\bigr)=0
\]
with solution
\[
\varphi(t)=x=\left(\frac{t^p+1}{2}\right)^{1/p}\qquad t>0.
\]
Since $ g $ is operator concave, we have no general theory to assert that $ \varphi(t) $ is operator convex. However, the statement follows from \cite[Theorem 2.2]{kn:hansen:2014}. We thus obtain that the associated operator mean of two variables $ X_g(A,B) $ is convex and has a unique extension to a multivariate operator mean satisfying the critical point equation in (\ref{critical point equation}).
\end{example}

{\bf Acknowledgements.}  
The author acknowledges support from the Japanese government Grant-in-Aid for scientific research 26400104.

{\small


\begin{thebibliography}{10}

\bibitem{kn:ando:1994:2}
T.~Ando and F.~Hiai.
\newblock Log majorization and complementary
  \uppercase{G}olden-\uppercase{T}hompson type inequalities.
\newblock {\em Linear Algebra Appl.}, 197 (198):113--131, 1994.

\bibitem{kn:ando:2004:1}
T.~Ando, C.-K. Li, and R.~Mathias.
\newblock Geometric means.
\newblock {\em Linear Algebra Appl.}, 385:305--334, 2004.

\bibitem{kn:bhatia:2006}
R.~Bhatia and J.~Holbrook.
\newblock Riemannian geometry and matrix geometric means.
\newblock {\em Linear Algebra and Its Applications}, 413:594--618, 2006.

\bibitem{kn:bhatia:2012}
R.~Bhatia and R.L. Karandikar.
\newblock Monotonicity of the matrix geometric mean.
\newblock {\em Math. Ann.}, 353:1453--1567, 2012.

\bibitem{kn:hansen:2014:2}
E.~Effros and F.~Hansen.
\newblock Non-commutative perspectives.
\newblock {\em Annals of Functional Analysis}, 5(2):74--79, 2014.

\bibitem{kn:hansen:2014}
F.~Hansen.
\newblock Trace functions with applications in quantum physics.
\newblock {\em J. Stat. Phys.}, 154:807--818, 2014.

\bibitem{kn:hansen:2003:2}
F.~Hansen and Pedersen G.K.
\newblock Jensen's operator inequality.
\newblock {\em Bull. London Math. Soc.}, 35:553--564, 2003.

\bibitem{kn:hansen:1995}
F.~Hansen and G.K. Pedersen.
\newblock Perturbation formulas for traces on $ \uppercase{C}^* $-algebras.
\newblock {\em Publ. RIMS, Kyoto Univ.}, 31:169--178, 1995.

\bibitem{kn:hansen:2014:4}
Frank Hansen.
\newblock Regular operator mappings and multivariate geometric means.
\newblock {\em Linear Algebra and Its Applications}, 461:123--138, 2014.

\bibitem{kn:hiai:2009}
Fumio Hiai and Denes Petz.
\newblock Riemannian metrics on positive definite matrices related to means.
\newblock {\em Linear Algebra and Its Applications}, 430:3105--3110, 2009.

\bibitem{kn:Lim:2013}
J.~Lawson and Y.~Lim.
\newblock Weighted means and \uppercase{K}archer equations of positive
  operators.
\newblock {\em Proc. Natl. Acad. Sci. U.S.A.}, 110(39):15626--15632, September
  2013.

\bibitem{kn:Lawson:2014}
J.~Lawson and Y.~Lim.
\newblock Karcher means and \uppercase{K}archer equations of positive definite
  operators.
\newblock {\em Trans. Amer. Math. Soc., Series B}, 1:1--22, 2014.

\bibitem{kn:li:2011}
Chi-Kwong Li and Yiu-Tung Poon.
\newblock Interpolation by completely positive maps.
\newblock {\em Linear and Multilinear Algebra}, 59(10):1159--1170, 2011.

\bibitem{kn:Lim:2012}
Y.~Lim and M.~P{\'a}lfia.
\newblock Matrix power means and the \uppercase{K}archer mean.
\newblock {\em Journal of Functional Analysis}, 262:1498--1514, 2012.

\bibitem{kn:moakher:2005}
M.~Moakher.
\newblock A differential geometric approach to the geometric mean of symmetric
  positive-definite matrices.
\newblock {\em SIAM J. Matrix. Anal. Appl.}, 26(3):735--747, 2005.

\bibitem{kn:palfia:2016}
Mikl\'os P\'alfia.
\newblock Operator means of probability measures and generalized
  \uppercase{K}archer equations.
\newblock {\em Advances in Mathematics}, 289:951--1007, 2016.

\end{thebibliography}

}

\end{document}